
\documentclass[openacc]{rstransa}


\newtheorem{definition}{\bf Definition}
\newtheorem{theorem}{\bf Theorem}
\newtheorem{lemma}{\bf Lemma}
\newtheorem{assumption}{\bf Assumption}

\newcommand{\indep}{\perp \!\!\! \perp}


\usepackage{amsmath}

\usepackage{bm}

\usepackage{color}

\titlehead{Research}

\usepackage[T1]{fontenc}			
\usepackage[sc,osf]{mathpazo}   	

\begin{document}

\title{Comparing the cost of violating causal assumptions in Bell experiments: locality, free choice and arrow-of-time}

\author{
Pawel Blasiak$^{\scriptscriptstyle 1,2}$ and Christoph Gallus$^{\scriptscriptstyle 3}$}

\address{$^{\scriptscriptstyle 1\,}$Institute for Quantum Studies, Chapman University, Orange, CA 92866, USA\\
$^{\scriptscriptstyle 2\,}$Institute of Nuclear Physics, Polish Academy of Sciences, 31342 Kraków, Poland\\
$^{\scriptscriptstyle 3\,}$Technische Hochschule Mittelhessen, 35390 Gießen, Germany}

\subject{quantum physics}

\keywords{Bell inequalities, causal fraction, locality, free choice, arrow-of-time, causal Bayesian models}

\corres{Pawel Blasiak\\
\email{pawel.blasiak@ifj.edu.pl}}

\begin{abstract}
The causal modelling of Bell experiments relies on three fundamental assumptions: locality, freedom of choice, and arrow-of-time. It turns out that nature violates Bell inequalities, which entails the failure of at least one of those assumptions. Since rejecting any of them -- even partially -- proves to be enough to explain the observed correlations, it is natural to ask about the cost in each case. This paper follows up on the results in PNAS {\bf 118} e2020569118 (2021), showing the equivalence between the locality and free choice assumptions, adding to the picture retro-causal models explaining the observed correlations. Here, we consider more challenging causal scenarios which allow only single-arrow type violations of a given assumption. The figure of merit chosen for the comparison of the causal cost is defined as the minimal frequency of violation of the respective assumption required for a simulation of the observed experimental statistics.
\end{abstract}


\begin{fmtext}

\end{fmtext}


\maketitle

\section{Introduction}

Violation of Bell inequalities in quantum theory~\cite{Be93,BrCaPiScWe14,Sc19} --- confirmed in a number of ingenious experiments~\cite{AsDaRo82,HeBeDrReKaBlRu15,GiVeWeHaHoPhSt15,ShMeChBiWaStGe15,As15,RaHaHoGaFrLeLi18,BIGBellCollaboration18} --- is a subject of heated debates in the quantum foundations community~\cite{No11a,Wi14a,JPA14}. This is because of the challenge it poses to the common classical worldview. Formally, Bell's theorem is a consequence of three causal assumptions that are neatly formulated in the language of \textit{causal Bayesian models}~\cite{Pe09,SpGlSc00,PeGlJe16}. The latter can be seen as the modern formalisation of the "classical" concept of causality which subsumes the Bell's original ideas~\cite{Be93,Va82}. Once this concept of causality (and its mathematical framework) is accepted as a fundamental principle, the assumptions of \textit{locality} and \textit{arrow-of-time} are the expression of our understanding of how variables are organised in space-and-time, while \textit{free choice} (or \textit{measurement independence}) is an assumption about the role of external parameters in modelling an experiment. When taken together, those three assumptions are contradicted by the observed quantum-mechanical statistics obtained in the so-called Bell experiment. It means that, if "classical" causality is to be maintained, at least one of those assumptions must fail. Needless to say, the rejection of either of them has profound philosophical and interpretational consequences~\cite{No17,La19,Ma19}, with research programs on each side. Violation of locality would require some physical mechanism operating instantaneously, irrespective of the distance and broadcasting to every corner of the universe, which stays in tension with Einstein's theory of relativity~\cite{Be93}. Retrocausality makes problematic the distinction between causes and effects, which are conventionally differentiated by the arrow-of-time~\cite{Pr96,WhAr20,LePu17,Ad22}. The rejection of free choice provokes a rethinking of the concept of observers and inflicts conspiratorial elements into the experimental analysis (or even super-determinism in the extreme case)~\cite{Br88,Ho16a,HoPa20}. Finally, one may also try to dismiss the fundamental role of causality in physics~\cite{No03}, modify its meaning~\cite{CaLa14a,WiCa17,AlBaHoLeSp17,Dz22a,CaLe23,Ad23}, or just remain content with purely operational accounts of physical theories~\cite{BuGrLa95,Li37}. In the following, we stick to the standard or "classical" approach to causality as laid out by Pearl and others~\cite{Pe09,SpGlSc00,PeGlJe16}, and henceforth drop the qualifier "classical".

In this paper, we take an impartial position with a view to analysing the consequences of a violation of each of the three causal assumptions. Within the causal Bayesian model framework, we first discuss a baseline Bell scenario which obeys all three assumptions. It allows us to see a violation of a given assumption as a departure from the baseline case by adding certain arrows. Clearly, the reasoning can be reversed, i.e., admitting the presence of a given arrow in a causal account of the observed statistics indicates a violation of the corresponding assumption. This observation provides a unified means for a quantitative comparison of all three assumptions in the Bell scenario using the concept of a \textit{causal fraction}. It aims to measure the cost/weight/strength of a given arrow by answering a simple question: \textit{"How often does a given arrow need to have been actually at work to account for the observed experimental statistics?"} We prove the equivalence of all three assumptions --- locality, arrow-of-time and free choice --- based on equating the term \textit{"cost"} with the causal fraction measure. In this sense, the causal framework in itself does not give preference to explanations based on the violation of either of the assumptions. This completes the results in~\cite{BlPoYeGaBo21}, which now encompass all three causal assumptions, and makes explicit their scope of validity within the causal Bayesian model framework.

\section{Causal modelling of Bell experiment}

Let us consider the usual Bell-type scenario with two parties, called Alice and Bob, performing experiments on two separated systems~\cite{Be93,BrCaPiScWe14,Sc19}. We will assume that the \textit{measurement settings} are labelled by $x\in\mathcal{X}$ for Alice and $y\in\mathcal{Y}$ for Bob, and their \textit{measurement outcomes} are labelled respectively $a\in\mathcal{A}$ and $b\in\mathcal{B}$. 
For the sake of generality, we do not restrict the cardinality of the sets $\mathcal{A}$, $\mathcal{B}$, $\mathcal{X}$ and $\mathcal{Y}$ (just assume that they are finite to avoid technical issues). A Bell experiment consists of a series of trials in which Alice and Bob each choose a setting and make a measurement registering the outcomes. Then, after many repetitions, they compare their results, arriving at statistics given by the joint distribution $P_{abxy}$. It is conventionally split into the conditional probability $P_{ab|xy}$ of Alice and Bob observing outcomes $a$ and $b$ given the choice of settings $x$ and $y$, and the distribution of settings  $P_{xy}$ chosen in the experiment. This follows the idea of understanding a measurement as probing an existing system by questions $x$, $y$, which are chosen by the experimenters Alice and Bob (or some devices on their behalf), to which the system responds with the respective answers $a$ and $b$. In the following, we will call conditional probabilities $P_{ab|xy}$ the  \textit{behaviour}, while $\big\{P_{ab|xy}\,,P_{xy}\big\}$ will be called the \textit{(experimental) statistics}.
Note that without assuming anything about the causal structure underlying the experiment, any experimental statistics is admissible as long it is well-normalised, i.e. we have $\sum_{a,b}\,P_{ab|xy}\,=\,1$ for each $(x,y)\in\mathcal{X}\times\mathcal{Y}$, and $\sum_{x,y}\,P_{xy}\,=\,1$.

Quantum theory gives a method for calculating the behaviour $P_{ab|xy}$ based on the formalism of Hilbert spaces. Strikingly, it raises a challenge to our most cherished intuitions/assumptions about the workings of the world. The first assumption concerns \textit{locality}. A Bell experiment can be arranged in a way that Alice and Bob are space-like separated, which suggests the lack of any influence between Alice's and Bob's sides. The second assumption relates to \textit{arrow-of-time}. It stipulates that the causes must precede the effects, and hence nothing can affect the past, rendering \textit{retrocausal} explanations implausible. The third assumption is about \textit{freedom of choice} (or \textit{measurement independence}). It amounts to the belief that experimental settings can be chosen independently of anything related to the investigated systems, and this extends to the choices made by random generators (on Alice's and Bob's behalf). Those three assumptions underly the famous Bell inequalities, whose violation by quantum mechanical correlations (observed in numerous experiments~\cite{AsDaRo82,HeBeDrReKaBlRu15,GiVeWeHaHoPhSt15,ShMeChBiWaStGe15,As15,RaHaHoGaFrLeLi18,BIGBellCollaboration18}) indicates the necessity of revising some fundamental concepts in our approach to modelling physical reality. 

In the following, we take the causal point of view to modelling the Bell experiment. We will use causal DAGs (\textit{directed acyclic graphs}) to encode the various assumptions as well as their violations (in terms of additional arrows) to discuss possible ways of explaining the observed statistics. The tool of choice in this analysis is the framework of Bayesian causal models~\cite{Pe09,SpGlSc00,PeGlJe16}, which can be seen as a modern formalisation of Bell's original ideas~\cite{Be93}.

\subsubsection*{($\bm{*}$) {Baseline scenario}}
\begin{figure}[h]
\centering
\includegraphics[width=0.85\columnwidth]{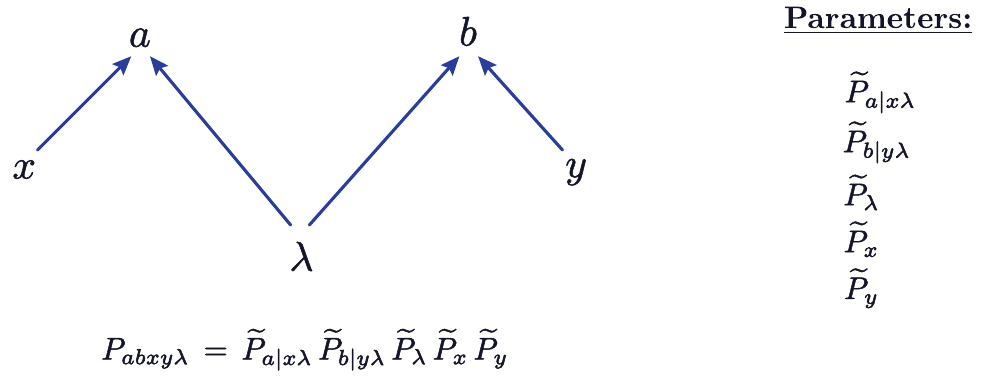}
\caption{\label{Fig-Bell-Baseline}\mbox{{\bf\textsf{Baseline Bell scenario.}}} This causal DAG encodes all three assumptions: \textit{locality}, \textit{arrow-of-time} and \textit{free choice}. Upon specifying the five distributions as parameters in the model (shown on the right-hand side) one obtains the joint probability $P_{abxy\lambda}$ as generated by the product formula (shown at the bottom). This baseline scenario fails to explain the observed quantum correlations which violate Bell inequalities.\vspace{-0.2cm}
}
\end{figure}

As explained above, there are four observed variables which are the choice of settings $x$,$y$ and the measurement outcomes $a$,$b$. In addition, we admit in the model some unobserved latent (or hidden) variable $\lambda$. Then, a general causal DAG, as depicted in Fig.~\ref{Fig-Bell-Baseline}, can serve as a causal model of the Bell experiment. The DAG encodes the three assumptions \textit{locality}, \textit{arrow-of-time} and \textit{free choice} in the following manner.

In the diagram, the variables $x$, $a$ pertaining to Alice are located on the left, while the variables $y$, $b$ pertaining to Bob are on the right. The \textit{locality} assumption prohibits any arrow from Alice's to Bob's side and vice versa. The time in the diagram goes from bottom to top, which is reflected in the placement of the variables. Thus the hidden variable $\lambda$ is interpreted as a common cause in the past, which may be correlated with future events. The \textit{arrow-of-time} assumption requires that all the arrows, representing causal relationships between the variables, follow the direction of time. A variable without incoming arrows is called a free (or exogenous) variable. It is supposed to be supplied from outside and hence unaffected by any factor described in the model. The \textit{free choice} assumption stipulates the choice of settings $x$ and $y$ to be free variables. Those three assumptions rule out certain arrows in the diagram leading to the causal DAG in Fig.~\ref{Fig-Bell-Baseline}.

In order to complete the model, the DAG needs to be furnished with the set of parameters denoted throughout the paper with tildes $\widetilde{P}$. Then, via the product decomposition (which follows from the Markov condition), one obtains the joint distribution of all variables $P_{abxy\lambda}$ in the model. This allows calculating the experimental statistics (as a simple application of Bayes' rule)
\begin{eqnarray}\label{base-1}
P_{ab|xy}&=&\sum_\lambda\ \widetilde{P}_{a|x
\lambda}\,\widetilde{P}_{b|y\lambda}\,\widetilde{P}_\lambda\,,\\\label{base-2}
P_{xy}&=&\widetilde{P}_{x}\, \widetilde{P}_{y}\,.
\end{eqnarray}
The experimental statistics $\big\{P_{ab|xy}\,,P_{xy}\big\}$ is explainable by the causal structure in Fig.~\ref{Fig-Bell-Baseline} only if it can be obtained in this way. Notably, the causal structure imposes constraints on the possible correlations in any compatible distribution. Firstly, the causal DAG induces certain conditional independencies as neatly described by the so-called \textit{d-separation} rules~\cite{Pe09}. They are fulfilled in the quantum mechanical statistics, where they boil down to non-signalling conditions.\footnote{\label{non-sig}Non-signalling boils down to two constraints on conditional distributions: $P_{b|xy}=\sum_a P_{ab|xy}=\sum_a P_{ab|x'y}=P_{b|x'y}$ (Alice cannot signal to Bob) and $P_{a|xy}=\sum_b P_{ab|xy}=\sum_b P_{ab|xy'}=P_{a|xy'}$ (Bob cannot signal to Alice) which holds for all $x,x'\in\mathcal{X}$ and $y,y'\in\mathcal{Y}$.\vspace{0.2cm}} Secondly, the behaviour compatible with the DAG in Fig.~\ref{Fig-Bell-Baseline} needs to satisfy the famous \textit{Bell inequalities}~\cite{Be93,BrCaPiScWe14,Sc19}. Quantum theory does not satisfy those inequalities, and hence the DAG in Fig.~\ref{Fig-Bell-Baseline} is not the right causal structure capable of explaining quantum mechanical correlations. It means that in order to explain the observed correlations (and remain within the conventional causal framework~\cite{Pe09,SpGlSc00,PeGlJe16}), the baseline scenario needs to be modified by adding or inverting some of the arrows. This, of course, entails a violation of at least one of the considered assumptions. Below we discuss three generic types of modifications to the baseline causal DAG that enable explaining the observed statistics. Here, we insist on minimal alterations to the DAG in Fig.~\ref{Fig-Bell-Baseline}, which consists in adding only a single arrow. We are guided by the universal principle of graphical causal models saying that if a single arrow is able to explain the statistics, then a richer graph will do it as well. Such minimal modifications will turn out to be sufficient and thus render considering more arrows redundant (or less interesting).

\subsubsection*{(\textsf{NL}) {Violation of locality}}

\begin{figure}[h]
\centering
\includegraphics[width=0.85\columnwidth]{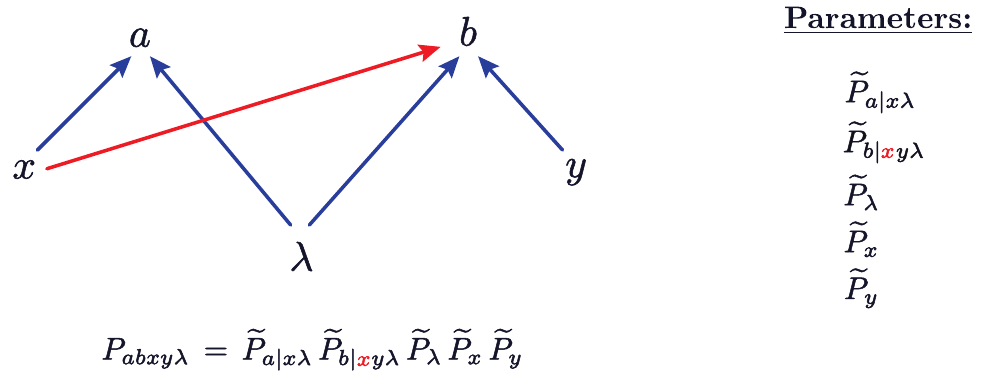}
\caption{\label{Fig-Bell-NL}\mbox{{\bf\textsf{Violation of locality.}}} Alice's choice of setting $x$ can influence Bob's outcome $b$. This stipulates the existence of some physical mechanism $x\rightarrow b$ (depicted by the red arrow) allowing instantaneous propagation of the causes to Bob's side, whatever the distance and location. Note that it requires broadcasting to every place in the universe, as Bob's location may be unknown.\vspace{-0.2cm}
}
\end{figure}
%

%
Relaxing the \textit{locality} assumption makes it possible for Alice's side to influence Bob's side and vice versa. We consider only the direction from Alice to Bob (since the problem is symmetrical). The only possibility is the arrow $x\rightarrow b$ in Fig.~\ref{Fig-Bell-NL}. A purported influence that $x$ might have on $b$ is called \textit{parameter dependence}. All other options need to be rejected. Note that adding the arrows $x\rightarrow y$ or $a\rightarrow y$ would violate the free choice assumption. As for inverting the arrow $\lambda\rightarrow a$ or $x\rightarrow a$, it would introduce retrocausality (as well as violate free choice in the latter case). The remaining possibility is to introduce the arrow $a\rightarrow b$, called \textit{outcome dependence}. However, it turns out to be insufficient to reproduce the quantum mechanical statistics in the Bell scenario with more than two settings.\footnote{\label{footnote-outcome-dependence}For a two-outcome, two-setting scenario, it is possible to simulate the statistics using only \textit{outcome dependence} (i.e. by introducing arrow $a\rightarrow b$ in Fig.~\ref{Fig-Bell-Baseline}) for an even broader class of non-signalling distributions (i.e., simulate the so-called PR-boxes~\cite{PoRo94}). However, if either Alice or Bob has a choice of three settings, then one can derive Bell-like inequalities for the DAG with \textit{outcome dependence} which are violated by quantum mechanical correlations (see~\cite{ChKuBrGr15} for a three-outcome, three-setting scenario).}

The causal DAG in Fig.~\ref{Fig-Bell-NL} entails certain constraints for the observed experimental statistics, which has to be in the form
\begin{eqnarray}\label{NL-1}
P_{ab|xy}&=&\sum_\lambda\ \widetilde{P}_{a|x
\lambda}\,\widetilde{P}_{b|xy\lambda}\,\widetilde{P}_\lambda\,,\\\label{NL-2}
P_{xy}&=&\widetilde{P}_{x}\, \widetilde{P}_{y}\,,
\end{eqnarray}
where, again, $\widetilde{P}$\,'s are parameters of the model that need to be specified.


\subsubsection*{(\textsf{R}) {Violation of arrow-of-time (retrocausality)}}

\begin{figure}[h]
\centering
\includegraphics[width=0.85\columnwidth]{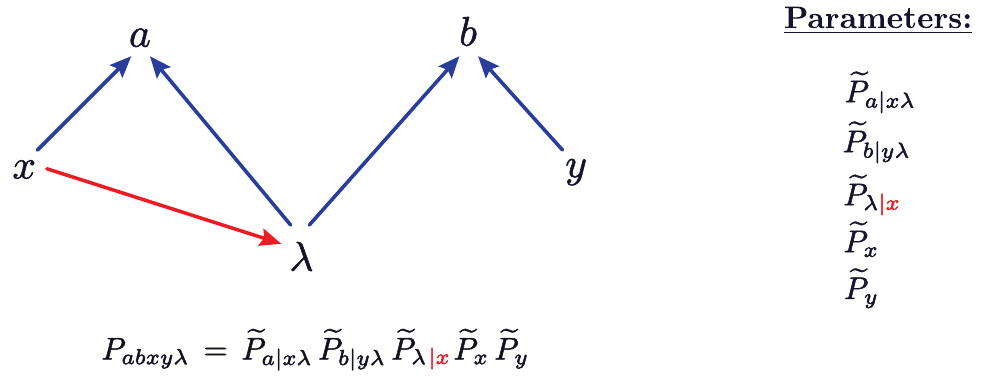}
\caption{\label{Fig-Bell-R}\mbox{{\bf\textsf{Violation of arrow-of-time (retrocausality).}}}  Alice's choice of setting $x$ affects hidden variable $\lambda$ in the past. It means rejection of the role of time in the relation between causes and effects. This allows a mechanism to build a 'local channel' between Alice and Bob $x\rightarrow\lambda\rightarrow b$, which goes 'through the past'.\vspace{-0.2cm}
}
\end{figure}

Let us admit \textit{retrocausal} actions in the form of the arrow $x\rightarrow \lambda$, which leads to Fig.~\ref{Fig-Bell-R}. It means that Alice's choice can influence the hidden variable $\lambda$ in the past (the case of Bob affecting $\lambda$ in the past is analogous). If locality and free choice are preserved, all other options need to be rejected. Note that inverting the arrow $x\rightarrow a$ would compromise Alice's free choice in addition to being retrocausal. On the other hand, the inversion of the arrow $\lambda\rightarrow a$ is tantamount to a non-local mechanism of the outcome-dependence type (via $\lambda$), which is insufficient to reproduce quantum correlations; cf. Footnote~\ref{footnote-outcome-dependence}. Thus we are left with the arrow $x\rightarrow \lambda$ as the only interesting retrocausal option which respects the other two assumptions.

Upon specifying parameters $\widetilde{P}$ in the model, one obtains the joint distribution $P_{abxy\lambda}$ in the product form shown at the bottom of Fig.~\ref{Fig-Bell-R}, and then calculates the  experimental statistics
\begin{eqnarray}\label{R-1}
P_{ab|xy}&=&\sum_\lambda\ \widetilde{P}_{a|x
\lambda}\,\widetilde{P}_{b|y\lambda}\,\widetilde{P}_{\lambda|x}\,,\\\label{R-2}
P_{xy}&=&\widetilde{P}_{x}\, \widetilde{P}_{y}\,,
\end{eqnarray}
which is the general form allowed by the causal DAG in Fig.~\ref{Fig-Bell-R}.

\subsubsection*{(\textsf{NF}) {Violation of free choice (measurement dependence)}}

\begin{figure}[h]
\centering
\includegraphics[width=0.85\columnwidth]{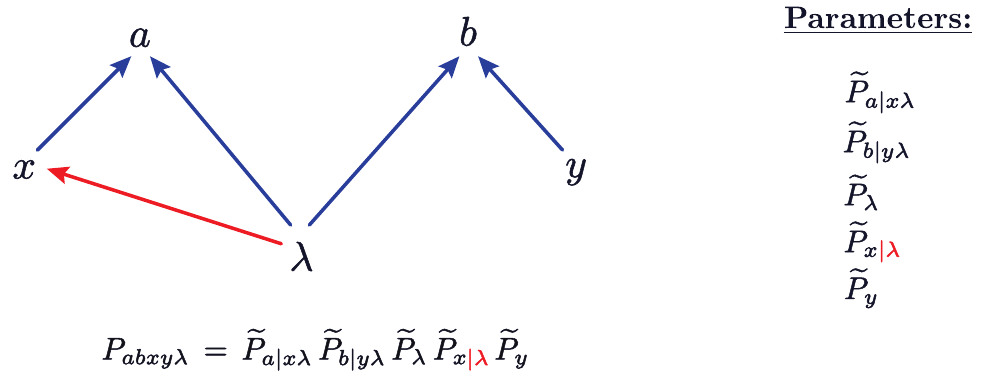}
\caption{\label{Fig-Bell-NF}\mbox{{\bf\textsf{Violation of free choice (measurement dependence).}}}  Alice's choice of setting $x$ is dictated by the hidden variable $\lambda$. In this case, the arrow $\lambda\rightarrow x$ compromises Alice's free choice. It is the modeller who has the freedom to set the variable $\lambda$, so as to get the desired distribution of settings on Alice's side. Note that in the extreme case, this leads to super-determinism or some milder version thereof.\vspace{-0.2cm}
}
\end{figure}

As the last option, we consider the possibility of violating Alice's \textit{free choice} while keeping locality and avoiding retrocausality. Here, Bob's freedom is retained. The variable $x$ is no longer free, which allows it to accept incoming arrows. There is only one such possibility that respects the other assumptions, that is adding the arrow $\lambda\rightarrow x$. Clearly, inverting the arrow $x\rightarrow a$ contradicts the arrow-of-time, and any arrow from Bob's side is ruled out by locality.

Let us note that in this case, the variable $x$ is \textit{not} exogenous anymore. It is a direct consequence of rejecting Alice's free choice. The parameter $\widetilde{P}_{x|\lambda}$ depends on the hidden variable $\lambda$. However, it still allows reproducing any desired distribution ${P}_x$ by exploiting the fact that the variable $\lambda$ is free (exogenous in the model), i.e. its choice is at the discretion of the modeller. The observed experimental statistics results from specifying the parameters $\widetilde{P}$ in the causal DAG in Fig.~\ref{Fig-Bell-NF}, which leads to the following expression
\begin{eqnarray}\label{NF-1}
P_{ab|xy}&=&\sum_\lambda\ \widetilde{P}_{a|x\lambda}\,\widetilde{P}_{b|y\lambda}\,\widetilde{P}_{\lambda|x}\,,\\\label{NF-2}
P_{xy}&=&\sum_\lambda\ \widetilde{P}_{x|\lambda}\,\widetilde{P}_\lambda\, \widetilde{P}_{y}\,.
\end{eqnarray}
As usual, this neat form is obtained from the joint distribution $P_{abxy\lambda}$ via Bayes' rule.\footnote{In this case, let us make this calculation explicitly:
\begin{eqnarray}\nonumber
&&\hspace{-0.5cm}P_{ab|xy}\,=\,
\frac{{\sum}_\lambda\,P_{abxy\lambda}}{P_{xy}}\,\stackrel{\scriptscriptstyle Fig.\ref{Fig-Bell-NF}}{=}\,
\frac{{\sum}_\lambda\,\widetilde{P}_{a|x\lambda}\widetilde{P}_{b|y\lambda}\widetilde{P}_{\lambda}\widetilde{P}_{x|\lambda}\widetilde{P}_{y}}{{\sum}_{ab\lambda}\,\widetilde{P}_{a|x\lambda}\widetilde{P}_{b|y\lambda}\widetilde{P}_{\lambda}\widetilde{P}_{x|\lambda}\widetilde{P}_{y}}\,=\,
\frac{{\sum}_\lambda\,\widetilde{P}_{a|x\lambda}\widetilde{P}_{b|y\lambda}\widetilde{P}_{x|\lambda}\widetilde{P}_{\lambda}\widetilde{P}_{y}}{{\sum}_{\lambda}\,\widetilde{P}_{x|\lambda}\widetilde{P}_{\lambda}\widetilde{P}_{y}}\,=\,\frac{{\sum}_\lambda\,\widetilde{P}_{a|x\lambda}\widetilde{P}_{b|y\lambda}\widetilde{P}_{x|\lambda}\widetilde{P}_{\lambda}\widetilde{P}_{y}}{\widetilde{P}_{x}\widetilde{P}_{y}}\,=\,{\sum}_\lambda\,\widetilde{P}_{a|x\lambda}\widetilde{P}_{b|y\lambda}\widetilde{P}_{\lambda|x}\,,
\\\nonumber
&&\hspace{-0.27cm}P_{xy}\,=\,
{\sum}_{ab\lambda}\,P_{abxy\lambda}\,\stackrel{\scriptscriptstyle Fig.\ref{Fig-Bell-NF}}{=}\,{\sum}_{ab\lambda}\,\widetilde{P}_{a|x\lambda}\widetilde{P}_{b|y\lambda}\widetilde{P}_{\lambda}\widetilde{P}_{x|\lambda}\widetilde{P}_{y}\,=\,
{\sum}_{\lambda}\,\widetilde{P}_{x|\lambda}\widetilde{P}_{\lambda}\widetilde{P}_{y}\,.
\end{eqnarray}}

\section{Causal explainability of experimental statistics}

Having identified causal strategies for violating the respective assumptions in Bell's theorem, it is natural to ask whether those mechanisms are actually enough to simulate the observed experimental statistics.

Let us formalise the concept of explainability of a given statistics by a given causal diagram.
\begin{definition}[Explainability]
We will say that the experimental statistics $\big\{P_{ab|xy}\,,P_{xy}\big\}$ is explainable (or compatible) with the causal structure \textcolor{blue}{(${*}$)}, \textcolor{blue}{(\textit{NL})}, \textcolor{blue}{(\textit{R})} or \textcolor{blue}{(\textit{NF})} if there exist a choice of parameters $\widetilde{P}$ such that the causal diagram in question reproduces the observed statistics, i.e., respectively Eqs.\,(\ref{base-1})-(\ref{base-2}), Eqs.\,(\ref{NL-1})-(\ref{NL-2}), Eqs.\,(\ref{R-1})-(\ref{R-2}) or Eqs.\,(\ref{NF-1})-(\ref{NF-2}) hold.
\end{definition}

Clearly, the baseline structure \textcolor{blue}{(${*}$)} cannot explain quantum mechanical statistics, since they violate Bell inequalities. For the structures \textcolor{blue}{(\textit{NL})}, \textcolor{blue}{(\textit{R})} or \textcolor{blue}{(\textit{NF})}, Bell inequalities are not an obstacle. 

We also note that the structures \textcolor{blue}{(\textit{NL})}, \textcolor{blue}{(\textit{R})} and \textcolor{blue}{(\textit{NF})}, shown diagrammatically in Figs.~\ref{Fig-Bell-NL}, \ref{Fig-Bell-R} and \ref{Fig-Bell-NF}, cannot explain arbitrary experimental statistics $\big\{P_{ab|xy}\,,P_{xy}\big\}$. From the \textit{d-separation} rules, some conditional independencies must hold. One immediately observes that in each of those diagrams, node $b$ is a collider which blocks all paths between nodes $a$ and $y$. It also blocks paths between $x$ and $y$. Therefore, we will henceforth impose the following assumption.
\begin{assumption}\label{assumption}
Let in the observed statistics $\big\{P_{ab|xy}\,,P_{xy}\big\}$ the following independence conditions hold
\begin{eqnarray}\label{assump}
a\indep y\,|\,x&\quad\&\quad&x\indep y\,.
\end{eqnarray}
\end{assumption}
\noindent This is a \textit{necessary} condition which has to be satisfied by any statistics compatible with \textcolor{blue}{(\textit{NL})}, \textcolor{blue}{(\textit{R})} or \textcolor{blue}{(\textit{NF})}. Note that this asymmetric condition boils down to non-signalling from Bob to Alice. Let us remark that in quantum theory this constraint is satisfied, since non-signalling holds in both directions$^\text{\ref{non-sig}}$. This renders the assumption innocuous for our purposes. Most importantly, it turns out that this is also a \textit{sufficient} condition, as explained in the following lemma (see \textbf{Appendix} for the proof).
\begin{lemma}\label{lemma}
If the experimental statistics $\big\{P_{ab|xy}\,,P_{xy}\big\}$ satisfies \textbf{Assumption~\ref{assumption}} then it is explainable (or compatible) with either of the causal structures \textcolor{blue}{(\textit{NL})}, \textcolor{blue}{(\textit{R})} or \textcolor{blue}{(\textit{NF})}.
\end{lemma}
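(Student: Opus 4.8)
\noindent The plan is to argue explainability \emph{constructively}: for each of the three causal structures I would write down an explicit family of parameters $\widetilde{P}$ that reproduces the given statistics. The single structural input used throughout is that \textbf{Assumption~\ref{assumption}} lets the behaviour be factored as
\[
P_{ab|xy}\;=\;P_{a|xy}\,P_{b|axy}\;=\;P_{a|x}\,P_{b|axy},
\]
the second equality being precisely $a\indep y\,|\,x$, while $x\indep y$ gives $P_{xy}=P_{x}P_{y}$. So the task reduces to routing the value of $a$ --- and, wherever Bob's node cannot see it, also the value of $x$ --- through the latent variable $\lambda$, so that Bob's local mechanism can output $b$ distributed as $P_{b|axy}$.

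For \textcolor{blue}{(\textit{R})} and \textcolor{blue}{(\textit{NF})}, whose behaviour equations (\ref{R-1}) and (\ref{NF-1}) coincide, I would let $\lambda$ range over $\mathcal{X}\times\mathcal{A}$. For \textcolor{blue}{(\textit{R})} put $\widetilde{P}_{\lambda=(x',a')\,|\,x}:=\delta_{x',x}\,P_{a=a'|x}$, $\widetilde{P}_{a|x,\,\lambda=(x',a')}:=\delta_{a,a'}$, $\widetilde{P}_{b|y,\,\lambda=(x',a')}:=P_{b|a=a',\,x=x',\,y}$, together with $\widetilde{P}_{x}:=P_{x}$ and $\widetilde{P}_{y}:=P_{y}$; substituting into (\ref{R-1}) collapses the $\lambda$-sum to $P_{a|x}P_{b|axy}=P_{ab|xy}$, and (\ref{R-2}) reads $P_{x}P_{y}=P_{xy}$. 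For \textcolor{blue}{(\textit{NF})} I keep $\widetilde{P}_{a|x\lambda}$, $\widetilde{P}_{b|y\lambda}$, $\widetilde{P}_{y}$ as above and realise the \emph{same} induced conditional $\widetilde{P}_{\lambda|x}$ through a root $\lambda$ by setting $\widetilde{P}_{\lambda=(x',a')}:=P_{x'}\,P_{a=a'|x=x'}$ and $\widetilde{P}_{x|\lambda=(x',a')}:=\delta_{x,x'}$; one checks $\widetilde{P}_{x}=\sum_{\lambda}\widetilde{P}_{x|\lambda}\widetilde{P}_{\lambda}=P_{x}$, so the induced $\widetilde{P}_{\lambda|x}$ agrees with the (\textit{R}) choice and (\ref{NF-1}) holds, while (\ref{NF-2}) becomes $P_{x}P_{y}=P_{xy}$.

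For \textcolor{blue}{(\textit{NL})}, Bob's node already sees $x$, so $\lambda$ need only communicate $a$; but in that DAG $\lambda$ is exogenous and cannot be conditioned on $x$, so I would instead let $\lambda$ range over functions $f:\mathcal{X}\to\mathcal{A}$ and set $\widetilde{P}_{f}:=\prod_{x\in\mathcal{X}}P_{a=f(x)|x}$, $\widetilde{P}_{a|x,\,\lambda=f}:=\delta_{a,f(x)}$, $\widetilde{P}_{b|xy,\,\lambda=f}:=P_{b|a=f(x),\,x,\,y}$, and again $\widetilde{P}_{x}:=P_{x}$, $\widetilde{P}_{y}:=P_{y}$. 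Then $\widetilde{P}_{f}$ is a probability distribution because $\sum_{f}\widetilde{P}_{f}=\prod_{x}\sum_{a}P_{a|x}=1$, its marginal satisfies $\sum_{f:\,f(x)=a}\widetilde{P}_{f}=P_{a|x}$, and substitution into (\ref{NL-1}) yields $\sum_{f:\,f(x)=a}P_{b|axy}\,\widetilde{P}_{f}=P_{a|x}P_{b|axy}=P_{ab|xy}$, with (\ref{NL-2}) again immediate.

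The only delicate point --- and the step I expect to be the real obstacle --- is that the ``natural'' definitions of the Bob-parameters involve conditionals such as $P_{b|axy}=P_{ab|xy}/P_{a|x}$ that are undefined when $P_{a|x}=0$ (and, in the (\textit{NF}) case, the induced $\widetilde{P}_{\lambda|x}$ tacitly needs $\widetilde{P}_{x}=P_{x}>0$). I would dispose of this by noting that in each construction the offending latent values carry zero weight wherever they are actually summed --- $\widetilde{P}_{\lambda=(x',a')|x}$ vanishes unless $P_{a'|x'}>0$, and $\widetilde{P}_{f}>0$ forces $P_{f(x)|x}>0$ for every $x$ --- so those Bob-parameters may be assigned an arbitrary valid distribution without disturbing any reconstruction identity; correspondingly $P_{ab|xy}=0$ whenever $P_{a|x}=0$, so the target is matched there too, and settings $x$ with $P_{x}=0$ (resp.\ pairs with $P_{xy}=0$) impose no constraint. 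Once these conventions are in place, checking the displayed identities and the normalisation of each $\widetilde{P}$ is a direct substitution.
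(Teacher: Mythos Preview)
Your construction is correct and follows essentially the same strategy as the paper: factor $P_{ab|xy}=P_{a|x}\,P_{b|axy}$ via \textbf{Assumption~\ref{assumption}}, then route the requisite information through $\lambda$ so that Bob can locally sample from $P_{b|axy}$. Your \textcolor{blue}{(\textit{NL})} and \textcolor{blue}{(\textit{NF})} models coincide with the paper's (you merely fix the concrete product measure $\widetilde P_f=\prod_x P_{f(x)|x}$ where the paper invokes an abstract decomposition of $P_{a|x}$ into deterministic strategies); for \textcolor{blue}{(\textit{R})} the paper takes $\Lambda=\mathfrak F\times\mathcal X$ whereas your choice $\Lambda=\mathcal X\times\mathcal A$ is slightly leaner but implements the same idea, and your explicit handling of the $P_{a|x}=0$ edge cases matches what the paper dismisses as ``straightforward''.
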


It means that any of the discussed causal mechanisms can properly account for the observed quantum mechanical correlations. In the following section, we compare those three mechanisms asking about their performance in simulating a given experiment based on an intuitive notion of a fractional measure. 

\section{Causal fraction as the cost of violation of a given assumption}

Suppose that the observed experimental statistics obeys \textbf{Assumption~\ref{assumption}} (for all distributions of settings $P_{xy}$), which means that it can be explained by causal models \textcolor{blue}{(\textit{NL})}, \textcolor{blue}{(\textit{R})} or \textcolor{blue}{(\textit{NF})} according to \textbf{Lemma~\ref{lemma}}. This implies that the additional arrow provides a means to simulate the behaviour $P_{ab|xy}$ for any distribution of settings $P_{xy}$. However, even if deemed required to account for the observations, it may be the case that this additional arrow is excessive to some extent. Perhaps it is enough for the extra arrow to be effective only occasionally to explain the observations. In other words, in the repeated experiment, one may refrain from using it on some fraction of trials and still be able to account for the observed statistics. So to say, the modeller can perform better by being frugal with extra resources/arrows and still achieve the goal of simulating the experiment. 

This idea can be formalised by decomposing the observed experimental statistics $\big\{P_{ab|xy}\,,P_{xy}\big\}$ in the form of a convex combination
\begin{eqnarray}\label{decomp-1}
P_{ab|xy}&=&(1-q
)\,P^{\,*}_{ab|xy}\,+\,q\,P^{\,\#}_{ab|xy}\,,\\\label{decomp-2}
P_{xy}&=&(1-q
)\,P^{\,*}_{xy}\,+\,q\,P^{\,\#}_{xy}\,,
\end{eqnarray}
where $\big\{P^{\,*}_{ab|xy}\,,P^{\,*}_{xy}\big\}$ is a baseline statistics compatible with \textcolor{blue}{($*$)}, and $\big\{P^{\,\#}_{ab|xy}\,,P^{\,\#}_{xy}\big\}$ is a statistics requiring the additional arrow. Here, the label $\#$ is meant as a placeholder for the case in question, i.e., whether it is the extra arrow $x\rightarrow b$ in \textcolor{blue}{(\textit{NL})}, $x\rightarrow \lambda$ in \textcolor{blue}{(\textit{R})} or $\lambda \rightarrow x$ in \textcolor{blue}{(\textit{NF})}.

Eqs.\,(\ref{decomp-1})-(\ref{decomp-2}) have a simple interpretation: \textit{the observed experimental statistics $\big\{P_{ab|xy}\,,P_{xy}\big\}$ can be explained using the additional causal mechanism (red arrow) only in the fraction of $q$ experimental runs, while refraining from its use in the remaining fraction of $1-q$ runs.}

Note that, as it stands above, the statistics  $\big\{P^{\,\#}_{ab|xy}\,,P^{\,\#}_{xy}\big\}$ can still be excessive and hence subject to optimisation. There can be many other competitive decompositions as well. Therefore, in order to get a unique result, one needs to \textit{minimise} $q$ over all possible convex decompositions in Eqs.\,(\ref{decomp-1})-(\ref{decomp-2}). However, we note that such an optimal solution can be sensitive to the choice of the distribution of settings $P_{xy}$. It is not hard to imagine that in experiments with non-trivially distributed settings, one can outperform other choices by exploiting the unevenness of the distribution $P_{xy}$ (e.g., by ignoring rare occurrences). For a fair comparison, one needs to take the worst-case scenario in the simulation task, since it is not a priori known which distribution of settings $P_{xy}$ comes along with the given behaviour $P_{ab|xy}$. This requires \textit{maximising} over all possible distributions of settings $P_{xy}$. As a result, we are left with the figure of merit that is related only to the behaviour $P_{ab|xy}$ (and valid for any distribution of settings $P_{xy}$).

We are led to the following definition of the \textit{fractional causal measure} of the violation of a given assumption in the Bell scenario. 
\begin{definition}[Causal fraction]For a given behaviour $P_{ab|xy}$ in the Bell scenario, we define the fractional causal measure in the following way
\begin{eqnarray}\label{causal-fraction}
\mu^{\scriptscriptstyle \#}&:=&\max_{\scriptscriptstyle P_{xy}}\ \min_{decomp.\atop (\ref{decomp-1})\text{-}(\ref{decomp-2})}\ q\,,
\end{eqnarray}
where $\#$ denotes the case under consideration, i.e., whether it is non-locality with the arrow $x\rightarrow b$ in \textcolor{blue}{(\textit{NL})}, retrocausality with the arrow $x\rightarrow \lambda$ in \textcolor{blue}{(\textit{R})} or the arrow $\lambda \rightarrow x$ violating free choice in \textcolor{blue}{(\textit{NF})}. This defines three measures $\mu^{\scriptscriptstyle NL}$, $\mu^{\scriptscriptstyle R}$ and $\mu^{\scriptscriptstyle NF}$.
\end{definition}
\noindent The causal fraction is designed so to assess the strength of a given causal arrow by answering a simple question:
\textit{"How often a given red arrow needs to be in effect to reproduce the observed behaviour $P_{ab|xy}$ for every possible distribution of settings $P_{xy}$?"} The expression "how often" is understood as the least fraction of trials in a repeated experiment when the red arrow has to be actually used to perform the simulation task.

The concept of causal fraction has broad applicability. As defined in Eq.\,(\ref{causal-fraction}), it is tailored to \textit{individually} assess the violation of a given assumption in the Bell scenario, i.e. whether the red arrow in question is $x\rightarrow b$ in \textcolor{blue}{(\textit{NL})}, $x\rightarrow \lambda$ in \textcolor{blue}{(\textit{R})}  or $\lambda \rightarrow x$ in \textcolor{blue}{(\textit{NF})}. Most importantly, this allows for a fair comparison of the causal assumptions in the Bell scenario based on the measure built on the very same notion of the fractional measure.

It may be a surprise that despite conceptual differences (and interpretational implications) between those assumptions, there is no difference in their fractional causal cost. This is shown by the following general structural result about the weight of causal explanations in the Bell scenario.

\newpage

\begin{theorem}\label{Theorem}
For any behaviour $P_{ab|xy}$ satisfying \textbf{Assumption~\ref{assumption}} (for all distributions $P_{xy}$), 
the fractional causal measures in Eq.\,(\ref{causal-fraction}) defined for non-local, retrocausal and non-free explanations are all the \underline{same}, i.e. we have
\begin{eqnarray}
\mu^{\scriptscriptstyle NL}\ =\ \mu^{\scriptscriptstyle R}\ =\ \mu^{\scriptscriptstyle NF}\,,
\end{eqnarray}
where the labels NL, R and NF stand respectively for the non-local arrow $x\rightarrow b$ in \textcolor{blue}{(\textit{NL})}, retrocausal arrow $x\rightarrow \lambda$ in \textcolor{blue}{(\textit{R})}, and non-free arrow $\lambda \rightarrow x$ in \textcolor{blue}{(\textit{NF})}.
\end{theorem}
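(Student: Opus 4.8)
The plan is to prove the theorem by showing that each of the three fractional causal measures equals the same quantity, namely the minimal weight $q$ in a convex decomposition of the \emph{behaviour} alone into a part compatible with the baseline structure $(*)$ (i.e., a local deterministic / Bell-local behaviour) and an arbitrary non-signalling remainder. Concretely, I would introduce the auxiliary quantity
\begin{eqnarray}\label{aux-defn}
\nu\ :=\ \min\Big\{\,q\ :\ P_{ab|xy}=(1-q)\,P^{\,*}_{ab|xy}+q\,P^{\,\prime}_{ab|xy},\ \ P^{\,*}\ \text{compatible with }(*),\ P^{\,\prime}\ \text{a valid behaviour}\,\Big\}\,,
\end{eqnarray}
and then establish $\mu^{\scriptscriptstyle NL}=\mu^{\scriptscriptstyle R}=\mu^{\scriptscriptstyle NF}=\nu$. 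The key observation that makes this work is that in each of the decompositions (\ref{decomp-1})--(\ref{decomp-2}) the constraint on the setting distribution is essentially vacuous after the $\max_{P_{xy}}$ is taken: for $(*)$ and $(\textit{NL})$ we need $P^{\,*}_{xy}=P^{\,\#}_{xy}=\widetilde P_x\widetilde P_y$ to be product, while for $(\textit{R})$ it is again product, and for $(\textit{NF})$ the $\#$-part can realise \emph{any} $P^{\,\#}_{xy}$ by the argument already given in the text (freedom in choosing $\widetilde P_{x|\lambda}$ and exogenous $\lambda$). So one should first argue that the worst-case over $P_{xy}$ is attained at a product distribution (or that the setting-distribution constraint drops out), reducing all three optimisations to a statement purely about behaviours.

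The core of the argument is then three containment lemmas at the level of behaviours. First, the set of behaviours compatible with $(\textit{R})$, Eq.\,(\ref{R-1}), literally coincides with the set of \emph{all} behaviours satisfying \textbf{Assumption~\ref{assumption}}: given any such $P_{ab|xy}$ one writes $\widetilde P_{\lambda|x}$, $\widetilde P_{a|x\lambda}$, $\widetilde P_{b|y\lambda}$ explicitly (this is exactly the content of the proof of \textbf{Lemma~\ref{lemma}} in the Appendix, which I may invoke). Second, and crucially, the behaviour equations (\ref{R-1}) and (\ref{NF-1}) are \emph{identical} as formulas, so $(\textit{R})$ and $(\textit{NF})$ generate the same family of behaviours and hence the same convex-decomposition problem — giving $\mu^{\scriptscriptstyle R}=\mu^{\scriptscriptstyle NF}$ almost immediately, once the $P_{xy}$-constraint has been handled as above. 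Third, for $(\textit{NL})$ one shows that a behaviour $P^{\,\#}_{ab|xy}$ of the form (\ref{NL-1}) can be any behaviour satisfying $a\indep y\,|\,x$ (Alice's marginal $\widetilde P_{a|x\lambda}$ is unconstrained by $y$, and $\widetilde P_{b|xy\lambda}$ is completely free), which is again the same admissible set. Thus in all three cases the $\#$-part ranges over the same convex set $\mathcal{N}$ of non-signalling-from-Bob behaviours, and the $*$-part ranges over the same Bell-local set $\mathcal{L}$, so the minimal $q$ is the same number $\nu$ in each case.

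I would organise the write-up as: (i) a reduction lemma showing $\mu^{\scriptscriptstyle \#}$ depends only on the behaviour, with the $\max_{P_{xy}}$ removing the setting-distribution constraint (handled slightly differently for $\textit{NL/R}$ versus $\textit{NF}$, but with the same conclusion); (ii) identification of the behaviour sets attainable by the $\#$-part in each of $(\textit{NL})$, $(\textit{R})$, $(\textit{NF})$, all equal to $\mathcal{N}=\{P_{ab|xy}:a\indep y\,|\,x\}$; (iii) conclude $\mu^{\scriptscriptstyle NL}=\mu^{\scriptscriptstyle R}=\mu^{\scriptscriptstyle NF}=\min\{q: P\in(1-q)\mathcal{L}+q\,\mathcal{N}\}$. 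The main obstacle I anticipate is step (i) for the $(\textit{NF})$ case: there one must be careful that when the $*$-part is forced to have a \emph{product} setting distribution $P^{\,*}_{xy}$ while the observed $P_{xy}$ is product, the $\#$-part's setting distribution $P^{\,\#}_{xy}$ is then also forced to be product, and one must check this product distribution is realisable by the $(\textit{NF})$ mechanism together with the desired $\#$-behaviour — i.e., that the behaviour and the setting distribution in the $\#$-part can be chosen \emph{independently}. This is plausible because $\lambda$ is exogenous and $\widetilde P_{x|\lambda}$ is a free parameter, but making the joint realisability precise (choosing a common $\lambda$ that simultaneously serves the behaviour decomposition and the setting marginal) is the delicate point and likely where the Appendix proof does its real work.
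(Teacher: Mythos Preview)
Your approach is correct and will reach the conclusion, but it is substantially more elaborate than the paper's proof, and the extra machinery you introduce is not needed.

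The paper's argument is a two-line observation at the level of \emph{full statistics} $\{P_{ab|xy},P_{xy}\}$, not behaviours. Fix any $P_{xy}$ and any decomposition as in Eqs.~(\ref{decomp-1})--(\ref{decomp-2}) with the $\#$-part compatible with one of \textit{NL}, \textit{R}, \textit{NF}. Since each of the three DAGs forces Assumption~\ref{assumption} by $d$-separation, the $\#$-statistics $\{P^{\#}_{ab|xy},P^{\#}_{xy}\}$ itself satisfies Assumption~\ref{assumption}; by \textbf{Lemma~\ref{lemma}} it is therefore compatible with \emph{each} of the other two structures as well. Hence for every fixed $P_{xy}$ the three sets of admissible decompositions are literally identical, so the inner $\min$ over decompositions coincides pointwise in $P_{xy}$, and the outer $\max$ is then trivially equal. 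No separation into behaviour and setting parts, no auxiliary $\nu$, and no ``reduction lemma'' about the $\max_{P_{xy}}$ is required.

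By contrast, you split the problem into (i) eliminating the $P_{xy}$ constraint and (ii) identifying the behaviour sets for each $\#$. Step (ii) is essentially re-deriving \textbf{Lemma~\ref{lemma}} at the behaviour level; you could simply cite it. Step (i) is where you locate the ``delicate point'' for \textit{NF} (joint realisability of a prescribed behaviour and a prescribed setting marginal with a common $\lambda$). This worry disappears entirely in the paper's framing, because \textbf{Lemma~\ref{lemma}} is stated and proved for \emph{statistics}, not just behaviours: the Appendix construction for \textit{NF} already produces a single $\lambda$-space and parameter set that reproduce both $P_{ab|xy}$ and $P_{xy}$ simultaneously. So your anticipated obstacle is already handled by the black-box lemma you are allowed to invoke. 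What your route does buy is the explicit identification $\mu^{\scriptscriptstyle\#}=\nu$ with the complement of the local fraction of the behaviour, which the paper's proof does not extract; but for the theorem as stated, the paper's bijection argument is both shorter and sidesteps the issue you flagged.
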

\begin{proof}
For the sake of generality, let the label \# be a placeholder for whichever case \textcolor{blue}{(\textit{NL})}, \textcolor{blue}{(\textit{R})} or \textcolor{blue}{(\textit{NF})}. Let the behaviour $P_{ab|xy}$ be fixed and choose some distribution of settings $P_{xy}$ which admit decomposition in Eqs.~(\ref{decomp-1})-(\ref{decomp-2}), i.e.,
\begin{eqnarray}
&&\big\{P^{\,*}_{ab|xy}\,,P^{\,*}_{xy}\big\}\quad\text{is a baseline statistics \textcolor{blue}{($*$)}, and}\\
&&\big\{P^{\,\#}_{ab|xy}\,,P^{\,\#}_{xy}\big\}\quad\text{is a statistics compatible with \#\,.}
\end{eqnarray}
Note that the latter statistics being compatible with \# has to satisfy \textbf{Assumption~\ref{assumption}}, which entails by \textbf{Lemma~\ref{lemma}} that it is also compatible with any other $\#^{\,\prime}$. Thus we can also write
\begin{eqnarray}\label{decomp-1'}
P_{ab|xy}&=&(1-q)\,P^{\,*}_{ab|xy}\,+\,q\,P^{\,\#^{\,\prime}}_{ab|xy}\,,\\\label{decomp-2'}
P_{xy}&=&(1-q)\,P^{\,*}_{xy}\,+\,q\,P^{\,\#^{\,\prime}}_{xy}\,,
\end{eqnarray}
where $\big\{P^{\,\#^{\,\prime}}_{ab|xy}\,,P^{\,\#^{\,\prime}}_{xy}\big\}$ is compatible with $\#^{\,\prime}$. This establishes for the statistics $\big\{P_{ab|xy}\,,P_{xy}\big\}$ a one-to-one correspondence between the decompositions in Eqs.\,(\ref{decomp-1})-(\ref{decomp-2}) and Eqs.\,(\ref{decomp-1'})-(\ref{decomp-2'}), each pertaining to the respective case $\#$ and $\#^{\,\prime}$. Since the weight $q$ is the same, we infer that 
\begin{eqnarray}
\min_{decomp.\atop (\ref{decomp-1})\text{-}(\ref{decomp-2})}\ q&=&\min_{decomp.\atop (\ref{decomp-1'})\text{-}(\ref{decomp-2'})}\ q\,.
\end{eqnarray}
Since it holds for every $P_{xy}$, we conclude that $\mu^{\scriptscriptstyle \#}=\mu^{\scriptscriptstyle \#^{\,\prime}}$. This ends the proof. 
\end{proof}

\section{Discussion}

The research in the paper concerns the analysis of assumptions in Bell's theorem. The adopted framework is based on the causal Bayesian models by Pearl and others~\cite{Pe09,SpGlSc00,PeGlJe16}, which subsumes Bell's original ideas~\cite{Be93,Va82}. It has the advantage of articulating the underlying assumptions and their possible violations in causal language with a simple graphical representation. Our main focus in the paper is the question of whether the formalism in itself can adjudicate between different causal mechanisms/arrows leading to the violation of the respective assumptions, without bias regarding their philosophical/interpretative consequences. We show that the cost of relaxing either assumption --- \textit{locality, arrow-of-time, or free choice} --- is always the \textit{same}, see \textbf{Theorem~\ref{Theorem}}. This result is formally established via the so-called \textit{causal fraction measure}, which computes the least frequency of violation of a given assumption required to simulate the observed behaviour. In each case, the measure is defined in a similar manner in order to provide a unified basis for a fair comparison. This completes the results in~\cite{BlPoYeGaBo21} with the discussion of retrocausal models.

In this work we insist on a single-arrow type of violation of a given assumption, see \textcolor{blue}{(\textit{NL})}, \textcolor{blue}{(\textit{R})} or \textcolor{blue}{(\textit{NF})}. That is, we consider the most modest scenarios that are enough to explain the observed statistics. We are guided here by the paradigm in graphical causal models, saying that the fewer arrows, the less expressive the power of the model (or, in other words, if a given causal graph is able to explain the statistics, then a model with more arrows will be able to explain it too). Therefore, the fewer arrows (or simpler DAG), the better, as far as causal explanations go. From this point of view, adding only a single extra arrow to the baseline scenario \textcolor{blue}{(${*}$)} makes explaining the desired statistics more challenging and interesting at the same time. This is what we were able to achieve in this paper (cf. \textbf{Lemma~\ref{lemma}}).

The results presented here are derived under \textbf{Assumption~\ref{assumption}} which is the consequence of single-arrow type violations in \textcolor{blue}{(\textit{NL})}, \textcolor{blue}{(\textit{R})} or \textcolor{blue}{(\textit{NF})}. Thus, we admit a broader range of experimental statistics with one-sided signalling only. Let us remark that even in the case of non-signalling correlations in both directions (like in quantum mechanics) this may add interpretational flexibility. This might be useful, for example, when the experimental design needs the asymmetric causal structure explaining the results (e.g., due to the specific temporal arrangement of events).

Note that we take a viewpoint where causal concepts are treated as fundamental in physical theories. Furthermore, we adopt the approach to causality as laid out by Pearl and others~\cite{Pe09,SpGlSc00,PeGlJe16}. Of course, one may dismiss the importance of causality as such, seeing it as an illusory or emergent phenomenon~\cite{No03}. Another approach is considering the possibility of a different causal framework on the fundamental level. See~\cite{CaLa14a,WiCa17,AlBaHoLeSp17,Dz22a,CaLe23,Ad23} for a few research directions along those lines. See also a discussion of the relationship between context independence and freedom of choice in~\cite{Dz22,Dz22a,Dz23}.

We remark that the models constructed in this work are necessarily fine-tuned (cf. \textbf{Lemma~\ref{lemma}}). It is in line with the result in~\cite{WoSp15}, where it was shown that the violation of Bell inequalities together with non-signalling conditions entails fine-tuning. Furthermore, we are not concerned with the overhead costs of simulation, i.e. the cardinality of the hidden variables is unrestricted. Although this is a standard assumption in the causal inference framework, it does not have to be the case~\cite{ZjWoSp21}.

Finally, we note that the causal fraction has been considered in the literature to measure non-locality in the Bell scenario and explicitly calculated in a few cases~\cite{Ha91,ElPoRo92,BaKePi06,CoRe08,PoBrGi12,AbBr11,AbBaMa17}. See\cite{BlPoYeGaBo21} for a compilation of those results. As for the arrow-of-time and free choice assumption, all quantitative measures designed in the literature are based on completely different notions. See~\cite{HaBr20} for retrocausality and~\cite{Ha10,Ha11,Ha16,BaGi11,PuRoBaLiGi14} for violation of free choice. Therefore, comparing those measures does not serve the purpose of adjudicating on the cost of violation of the respective assumptions in the Bell experiment. The present paper gives an impartial treatment of all three causal assumptions underlying Bell's theorem based on the unifying concept of causal fractions. 

\vskip10pt

\ack{\ \\PB acknowledges support from the Polish-U.S. Fulbright Commission and the hospitality of the Institute of Quantum Studies at Chapman University. We thank Matt Leifer, Emmanuel Pothos and James Yearsley for inspiring discussions on the subject.}

\vskip30pt

\section*{Appendix: Proof of \textit{\textbf{Lemma~\ref{lemma}}}}

From the \textbf{Assumption~\ref{assumption}} we have that 
\begin{eqnarray}\label{Pa|x}
P_{a|x}\ \stackrel{\scriptscriptstyle(\ref{assump})}{=}\ P_{a|xy}\ =\ \sum_b\, P_{ab|xy}
\end{eqnarray}
is well-defined given the behaviour $P_{ab|xy}$. Furthermore, the distribution of settings factorises
\begin{eqnarray}\label{Pxy}
P_{xy}\ \stackrel{\scriptscriptstyle(\ref{assump})}{=}\ P_{x}\,P_{y}\,,
\end{eqnarray}
where $P_x={\sum}_y\,P_{xy}$ and $P_y={\sum}_x\,P_{xy}$ are marginals of $P_{xy}$.

It will be useful to observe that the distribution $P_{a|x}$ in Eq.\,(\ref{Pa|x}) can be decomposed as a convex combination of deterministic strategies $\delta_{a=f(x)}$. That is, we have
\begin{eqnarray}\label{Pa|x-det}
P_{a|x}&=&\sum_{f\in\mathfrak{F}}\,\delta_{a=f(x)}\,P_f\,,
\end{eqnarray}
where $\mathfrak{F}:=\big\{\,f:\mathcal{X}\rightarrow \mathcal{A}\,\big\}\equiv\mathcal{A}^\mathcal{X}$ is the set of functions from $\mathcal{X}$ to $\mathcal{A}$, and $P_f$ is a well-normalized distribution ${\sum}_f\,P_{f}=1$.  

For the sake of simplicity, let us assume that all probabilities are non-zero, so as to comfortably use  Bayes' rule without bothering with divisions by zero (those cases can be treated separately in a straightforward manner).

For the proof, in each particular causal model \textcolor{blue}{(\textit{NL})}, \textcolor{blue}{(\textit{R})} and \textcolor{blue}{(\textit{NF})}, we will need to explicitly define the hidden variable space $\Lambda$, give the corresponding set of parameters $\widetilde{P}$, and show that it reproduces the desired experimental statistics $\big\{P_{ab|xy}\,,P_{xy}\big\}$ via the respective Eqs.\,(\ref{NL-1})-(\ref{NL-2}), Eqs.\,(\ref{R-1})-(\ref{R-2}) and Eqs.\,(\ref{NF-1})-(\ref{NF-2}).

\begin{proof}[Proof of case \textcolor{blue}{(\textit{NL})}]\ \\
Let the hidden variable space be as follows
\begin{eqnarray}
&&\Lambda\ :=\ \mathfrak{F}\,,\quad\text{i.e. we have}\quad \lambda\ \equiv\ f\,,
\end{eqnarray}
 and define the parameters to be
\begin{eqnarray}
\widetilde{P}_{a|x\lambda}&:=&\delta_{a=f(x)}\,,\\
\widetilde{P}_{b|xy\lambda}&:=&\frac{P_{f(x)b|xy}}{P_{f(x)|x}}\,,\\
\widetilde{P}_{\lambda}&:=&{P}_f\,,\\
\widetilde{P}_x&:=&P_x\,,\\
\widetilde{P}_y&:=&P_y\,.
\end{eqnarray}
Then we calculate
\begin{eqnarray}
{\sum}_\lambda\,\widetilde{P}_{a|x\lambda}\,\widetilde{P}_{b|xy\lambda}\,\widetilde{P}_{\lambda}&=&
{\sum}_f\ \delta_{a=f(x)}\ \frac{P_{f(x)b|xy}}{P_{f(x)|x}}\ P_{f}\\
&=&{\sum}_f\ \delta_{a=f(x)}\ \frac{P_{ab|xy}}{P_{a|x}}\ \ P_{f}\ \stackrel{\scriptscriptstyle(\ref{Pa|x-det})}{=}\ \ \frac{P_{ab|xy}}{P_{a|x}}\ P_{a|x}\ \ =\ \ P_{ab|xy}\,,
\end{eqnarray}
which proves Eq.\,(\ref{NL-1}), and
\begin{eqnarray}
\widetilde{P}_x\,\widetilde{P}_y&=&P_x\,P_y\ \ \stackrel{\scriptscriptstyle(\ref{Pxy})}{=}\ \ P_{xy}\,,
\end{eqnarray}
which proves Eq.\,(\ref{NL-2}).
\end{proof}

\begin{proof}[Proof of case \textcolor{blue}{(\textit{R})}]\ \\
Here, we take the hidden variable space in the form
\begin{eqnarray}
&&\Lambda\ :=\ \mathfrak{F}\times \mathcal{X}\,,\quad\text{i.e. we have}\quad \lambda\ \equiv\ (f,\tilde{x})\,,
\end{eqnarray}
and define the parameters in the following way
\begin{eqnarray}
\widetilde{P}_{a|x\lambda}&:=&\delta_{a=f(x)}\,,\\
\widetilde{P}_{b|y\lambda}&:=&\frac{P_{f(\tilde{x})b|\tilde{x}y}}{P_{f(\tilde{x})|\tilde{x}}}\,,\\
\widetilde{P}_{\lambda|x}&:=&{P}_f\ \delta_
{x=\tilde{x}}\,,\\
\widetilde{P}_x&:=&P_x\,,\\
\widetilde{P}_y&:=&P_y\,.
\end{eqnarray}
This gives 
\begin{eqnarray}
\sum_\lambda\,\widetilde{P}_{a|x\lambda}\,\widetilde{P}_{b|y\lambda}\,\widetilde{P}_{\lambda|x}&=&
\sum_{f,\tilde{x}}\ \delta_{a=f(x)}\ \frac{P_{f(\tilde{x})b|\tilde{x}y}}{P_{f(\tilde{x})|\tilde{x}}}\ P_{f}\ \delta_
{x=\tilde{x}}\\
&=&
\sum_{f}\ \delta_{a=f(x)}\ \frac{P_{f({x})b|{x}y}}{P_{f({x})|{x}}}\ P_{f}\\
&=&
\sum_{f}\ \delta_{a=f(x)}\ \frac{P_{ab|{x}y}}{P_{a|{x}}}\ P_{f}
\ \stackrel{\scriptscriptstyle(\ref{Pa|x-det})}{=}\ \ \frac{P_{ab|xy}}{P_{a|x}}\ P_{a|x}\ \ =\ \ P_{ab|xy}\,,
\end{eqnarray}
which proves Eq.\,(\ref{R-1}), and
\begin{eqnarray}
\widetilde{P}_x\,\widetilde{P}_y&=&P_x\,P_y\ \ \stackrel{\scriptscriptstyle(\ref{Pxy})}{=}\ \ P_{xy}\,,
\end{eqnarray}
which proves Eq.\,(\ref{R-2}).
\end{proof}

\begin{proof}[Proof of case \textcolor{blue}{(\textit{NF})}]\ \\
In this case, let the hidden variable space be as follows
\begin{eqnarray}
&&\Lambda\ :=\ \mathcal{A}\times \mathcal{X}\,,\quad\text{i.e. we have}\quad \lambda\ \equiv\ (\tilde{a},\tilde{x})\,,
\end{eqnarray}
and put the parameters in the following form
\begin{eqnarray}
\widetilde{P}_{a|x\lambda}&:=&\delta_{a=\tilde{a}}\,,\\
\widetilde{P}_{b|y\lambda}&:=&\frac{P_{\tilde{a}b|\tilde{x}y}}{P_{\tilde{a}|\tilde{x}}}\,,\\
\widetilde{P}_{\lambda}&:=&{P}_{\tilde{a}\tilde{x}}\,,\\
\widetilde{P}_{x|\lambda}&:=&\delta_{x=\tilde{x}}\,,\\
\widetilde{P}_y&:=&P_y\,.
\end{eqnarray}
Note that the distribution $P_{ax}$ obtains either via $P_{ax}\ =\ P_{a|x}\,P_x$ or $P_{ax}\ =\ {\sum}_{b,y}\,P_{ab|xy}\,P_{xy}$\,.

\noindent From those definitions by the Bayes' rule, it follows that
\begin{eqnarray}\label{Plambda|x}
\widetilde{P}_{\lambda|x}&=&\frac{\widetilde{P}_{x|\lambda}\ \widetilde{P}_{\lambda}}{\widetilde{P}_{x}}\ \ =\ \ \frac{\delta_{x=\tilde{x}}\ P_{\tilde{a}\tilde{x}}}{P_x}\ \ =\ \ \frac{\delta_{x=\tilde{x}}\ P_{\tilde{a}\tilde{x}}}{P_{\tilde{x}}}\ \ =\ \ \delta_{x=\tilde{x}}\ P_{\tilde{a}|\tilde{x}}\,,
\end{eqnarray}
where in the denominator we have used the fact that
\begin{eqnarray}
\widetilde{P}_x\ =\ {\sum}_\lambda\,\widetilde{P}_{x|\lambda}\,\widetilde{P}_\lambda\ =\ \sum_{\tilde{a},\tilde{x}}\,\delta_{x=\tilde{x}}\,P_{\tilde{a}\tilde{x}}\ =\ \sum_{\tilde{a}}\,P_{\tilde{a}x}\ =\ P_{x}\,.
\end{eqnarray}
Then we have 
\begin{eqnarray}
\sum_\lambda\,\widetilde{P}_{a|x\lambda}\,\widetilde{P}_{b|y\lambda}\,\widetilde{P}_{\lambda|x}&\stackrel{\scriptscriptstyle(\ref{Plambda|x})}{=}&
\sum_{\tilde{a},\tilde{x}}\ \delta_{a=\tilde{a}}\ \frac{P_{\tilde{a}b|\tilde{x}y}}{P_{\tilde{a}|\tilde{x}}}\ \delta_{x=\tilde{x}}\ P_{\tilde{a}|\tilde{x}}\ \ =\ \ P_{ab|xy}\,,
\end{eqnarray}
which proves Eq.\,(\ref{NF-1}), and
\begin{eqnarray}
\sum_\lambda\,\widetilde{P}_{x|\lambda}\,\widetilde{P}_\lambda\,\widetilde{P}_y&=&\sum_{\tilde{a},\tilde{x}}\,P_{\tilde{a}\tilde{x}}\,\delta_{x=\tilde{x}}\,{P}_y\ \ =\ \ \sum_{\tilde{a}}\,P_{\tilde{a}{x}}\,{P}_y\ \ =\ \ {P}_x\,{P}_y\ \ \stackrel{\scriptscriptstyle(\ref{Pxy})}{=}\ \ P_{xy}\,,
\end{eqnarray}
which proves Eq.\,(\ref{NF-2}).
\end{proof}

\newpage


\bibliographystyle{RS}
\bibliography{CombQuant}

\begin{thebibliography}{99}

\bibitem{Be93}
Bell JS. 1987 {\em Speakable and unspeakable in quantum mechanics}.
Cambridge University Press.

\bibitem{BrCaPiScWe14}
Brunner N, Cavalcanti D, Pironio S, Scarani V, Wehner S. 2014  Bell
  nonlocality. {\em Rev. Mod. Phys.} \textbf{86}, 419.

\bibitem{Sc19}
Scarani V. 2019 {\em Bell Nonlocality}.
Oxford University Press.

\bibitem{AsDaRo82}
Aspect A, Dalibard J, Roger G. 1982  Experimental {T}est of {B}ell's
  {I}nequalities {U}sing {T}ime-{V}arying {A}nalyzers. {\em Phys. Rev. Lett.}
  \textbf{49}, 1804.

\bibitem{HeBeDrReKaBlRu15}
Hensen B, Bernien H, Dreau AE, Reiserer A, Kalb N, Blok MS, Ruitenberg J,
  Vermeulen RFL, Schouten RN, Abellan C, Amaya W, Pruneri V, Mitchell MW,
  Markham M, Twitchen DJ, Elkouss D, Wehner S, Taminiau TH, Hanson R. 2015
  Loophole-free {B}ell inequality violation using electron spins separated by
  1.3 kilometres. {\em Nature} \textbf{526}, 682.

\bibitem{GiVeWeHaHoPhSt15}
Giustina M, Versteegh MAM, Wengerowsky S, Handsteiner J, Hochrainer A, Phelan
  K, Steinlechner F, Kofler J, Larsson JA, Abell{\'a}n C, Amaya W, Pruneri V,
  Mitchell MW, Beyer J, Gerrits T, Lita AE, Shalm LK, Nam SW, Scheidl T, Ursin
  R, Wittmann B, Zeilinger A. 2015  Significant-{L}oophole-{F}ree {T}est of
  {B}ell's {T}heorem with {E}ntangled {P}hotons. {\em Phys. Rev. Lett.}
  \textbf{115}, 250401.

\bibitem{ShMeChBiWaStGe15}
Shalm LK, Meyer-Scott E, Christensen BG, Bierhorst P, Wayne MA, Stevens MJ,
  Gerrits T, Glancy S, Hamel DR, Allman MS, Coakley KJ, Dyer SD, Hodge C, Lita
  AE, Verma VB, Lambrocco C, Tortorici E, Migdall AL, Zhang Y, Kumor DR, Farr
  WH, Marsili F, Shaw MD, Stern JA, Abell{\'a}n C, Amaya W, Pruneri V,
  Jennewein T, Mitchell MW, Kwiat PG, Bienfang JC, Mirin RP, Knill E, Nam SW.
  2015  Strong {L}oophole-{F}ree {T}est of {L}ocal {R}ealism. {\em Phys. Rev.
  Lett.} \textbf{115}, 250402.

\bibitem{As15}
Aspect A. 2015  Closing the {D}oor on {E}instein and {B}ohr's {Q}uantum
  {D}ebate. {\em Physics} \textbf{8}.

\bibitem{RaHaHoGaFrLeLi18}
Rauch D, Handsteiner J, Hochrainer A, Gallicchio J, Friedman AS, Leung C, Liu
  B, Bulla L, Ecker S, Steinlechner F, Ursin R, Hu B, Leon D, Benn C, Ghedina
  A, Cecconi M, Guth AH, Kaiser DI, Scheidl T, Zeilinger A. 2018  Cosmic {B}ell
  {T}est {U}sing {R}andom {M}easurement {S}ettings from {H}igh-{R}edshift
  {Q}uasars. {\em Phys. Rev. Lett.} \textbf{121}, 080403.

\bibitem{BIGBellCollaboration18}
{The BIG Bell test Collaboration}. 2018  Challenging local realism with human
  choices. {\em Nature} \textbf{557}, 212.

\bibitem{No11a}
Norsen T. 2011  John {S}. {B}ell's concept of local causality. {\em Am. J.
  Phys.} \textbf{79}, 1261.

\bibitem{Wi14a}
Wiseman HM. 2014  The two {B}ell's theorems of {J}ohn {B}ell. {\em J. Phys. A:
  Math. Theor.} \textbf{47}, 424001.

\bibitem{JPA14}
 See a discussion in "Special issue on 50 years of Bell's theorem". J. Phys. A:
  Math. Theor. {\bf 42} (2014).

\bibitem{Pe09}
Pearl J. 2009 {\em Causality: Models, Reasoning, and Inference}.
Cambridge University Press 2nd edition.

\bibitem{SpGlSc00}
Spirtes P, Glymour C, Scheines R. 2000 {\em Causation, Prediction, and Search}.
The MIT Press.

\bibitem{PeGlJe16}
Pearl J, Glymour M, Jewell NP. 2016 {\em Causal Inference in Statistics: A
  Primer}.
Wiley.

\bibitem{Va82}
Van~Fraassen BC. 1982  The {C}harybdis of realism: {E}pistemological
  implications of {B}ell's inequality. {\em Synthese} \textbf{52}, 25.

\bibitem{No17}
Norsen T. 2017 {\em Foundations of Quantum Mechanics: An Exploration of the
  Physical Meaning of Quantum Theory}.
Undergraduate Lecture Notes in Physics. Springer.

\bibitem{La19}
Lalo\"e F. 2019 {\em Do we really understand quantum mechanics?}
Cambridge University Press 2nd edition.

\bibitem{Ma19}
Maudlin T. 2019 {\em Philosophy of Physics: Quantum Theory}.
Princeton University Press.

\bibitem{Pr96}
Price H. 1996 {\em Time's Arrow and Archimedes' Point: New Directions for the
  Physics of Time}.
Oxford University Press.

\bibitem{WhAr20}
Wharton KB, Argaman N. 2020  Colloquium: {B}ell's theorem and locally mediated
  reformulations of quantum mechanics. {\em Rev. Mod. Phys.} \textbf{92},
  021002.

\bibitem{LePu17}
Leifer MS, Pusey MF. 2017  Is a time symmetric interpretation of quantum theory
  possible without retrocausality?. {\em Proc. R. Soc. A} \textbf{473},
  20160607.

\bibitem{Ad22}
Adlam E. 2022  Two roads to retrocausality. {\em Synthese} \textbf{200}, 422.

\bibitem{Br88}
Brans CH. 1988  Bell's theorem does not eliminate fully causal hidden
  variables. {\em Int. J. Theor. Phys.} \textbf{27}, 219.

\bibitem{Ho16a}
't~Hooft G. 2016 {\em The Cellular Automaton Interpretation of Quantum
  Mechanics}.
Fundamental Theoriesof Physics, Vol. 185. Springer.

\bibitem{HoPa20}
Hossenfelder S, Palmer T. 2020  Rethinking {S}uperdeterminism. {\em Front.
  Phys.} \textbf{8}, 139.

\bibitem{No03}
Norton JD. 2003  Causation as {F}olk {S}cience. {\em Philos. Imprint}
  \textbf{3}, 1.

\bibitem{CaLa14a}
Cavalcanti EG, Lal R. 2014  On modifications of {R}eichenbach's principle of
  common cause in light of {}Bell's theorem. {\em J. Phys. A: Math. Theor.}
  \textbf{47}, 424018.

\bibitem{WiCa17}
Wiseman HM, Cavalcanti EG. 2017  Causarum {I}nvestigatio and the{T}wo {B}ell's
  {T}heorems of {J}ohn {B}ell. In Bertlmann R, Zeilinger A, editors, {\em
  Quantum {$[$}Un{$]$}Speakables II: Half a Century of Bell's Theorem} pp.
  119--142. Springer.

\bibitem{AlBaHoLeSp17}
Allen JMA, Barrett J, Horsman D, Lee CM, Spekkens RW. 2017  Quantum {C}ommon
  {C}auses and {Q}uantum {C}ausal {M}odels. {\em Phys. Rev. X} \textbf{7},
  031021.

\bibitem{Dz22a}
Dzhafarov EN. 2022  Contents, {C}ontexts, and {B}asics of {C}ontextuality. In
  Wuppuluri S, Stewart I, editors, {\em From Electrons to Elephants and
  Elections: Exploring the Role of Content and Context} ,  pp. 259--286.
  Springer.

\bibitem{CaLe23}
Catani L, Leifer M. 2023  A mathematical framework for operational fine
  tunings. {\em Quantum} \textbf{7}, 948.

\bibitem{Ad23}
Adlam E. 2023  Is there causation in fundamental physics? {N}ew insights from
  process matrices and quantum causal modelling. {\em Synthese} \textbf{201},
  152.

\bibitem{BuGrLa95}
Busch P, Grabowski M, Lahti PJ. 1995 {\em Operational Quantum Physics}.
Lecture Notes in Physics. Springer-Verlag.

\bibitem{Li37}
Lindsay RB. 1937  A critique of operationalism in physics. {\em Philos. Sci.}
  \textbf{4}, 456.

\bibitem{BlPoYeGaBo21}
Blasiak P, Pothos EM, Yearsley JM, Gallus C, Borsuk E. 2021  Violations of
  locality and free choice are equivalent resources in {B}ell experiments. {\em
  Proc. Natl. Acad. Sci. USA (PNAS)} \textbf{118}, e2020569118.

\bibitem{PoRo94}
Popescu S, Rohrlich D. 1994  Quantum {N}onlocality as an {A}xiom. {\em Found.
  Phys.} \textbf{24}, 379.

\bibitem{ChKuBrGr15}
Chaves R, Kueng R, Brask JB, Gross D. 2015  Unifying {F}ramework for
  {R}elaxations of the {C}ausal {A}ssumptions in {B}ell's {T}heorem. {\em Phys.
  Rev. Lett.} \textbf{114}, 140403.

\bibitem{Dz22}
Dzhafarov EN. 2022  Context-independent mapping and free choice are equivalent:
  a general proof. {\em J. Phys. A: Math. Theor.} \textbf{55}, 305304.

\bibitem{Dz23}
Dzhafarov EN. 2023  Hidden variables, free choice, context-independence, and
  all that. {\em arXiv:2305.16132 [quant-ph]}.

\bibitem{WoSp15}
Wood CJ, Spekkens RW. 2015  The lesson of causal discovery algorithms for
  quantum correlations: causal explanations of {B}ell-inequality violations
  require fine-tuning. {\em New J. Phys.} \textbf{17}, 033002.

\bibitem{ZjWoSp21}
Zjawin B, Wolfe E, Spekkens RW. 2021  Restricted {H}idden {C}ardinality
  {C}onstraints in {C}ausal {M}odels. {\em EPTCS} \textbf{343}, 119.

\bibitem{Ha91}
Hardy L. 1991  A new way to obtain {B}ell inequalities. {\em Phys. Lett. A}
  \textbf{161}, 21.

\bibitem{ElPoRo92}
Elitzur AC, Popescu S, Rohrlich D. 1992  Quantum nonlocality for each pair in
  an ensemble. {\em Phys. Lett. A} \textbf{162}, 25.

\bibitem{BaKePi06}
Barrett J, Kent A, Pironio S. 2006  Maximally {N}onlocal and {M}onogamous
  {Q}uantum {C}orrelations. {\em Phys. Rev. Lett.} \textbf{97}, 170409.

\bibitem{CoRe08}
Colbeck R, Renner R. 2008  Hidden {V}ariable {M}odels for {Q}uantum {T}heory
  {C}annot {H}ave {A}ny {L}ocal {P}art. {\em Phys. Rev. Lett.} \textbf{101},
  050403.

\bibitem{PoBrGi12}
Portmann S, Branciard C, Gisin N. 2012  Local content of all pure two-qubit
  states. {\em Phys. Rev. A} \textbf{86}, 012104.

\bibitem{AbBr11}
Abramsky S, Brandenburger A. 2011  The sheaf-theoretic structure of
  non-locality and contextuality. {\em New J. Phys.} \textbf{13}, 113036.

\bibitem{AbBaMa17}
Abramsky S, Barbosa RS, Mansfield S. 2017  Contextual {F}raction as a {M}easure
  of {C}ontextuality. {\em Phys. Rev. Lett.} \textbf{119}, 050504.

\bibitem{HaBr20}
Hall MJW, Branciard C. 2020  Measurement-dependence cost for {B}ell
  nonlocality: {C}ausal versus retrocausal models. {\em Phys. Rev. A}
  \textbf{102}, 052228.

\bibitem{Ha10}
Hall MJW. 2010  Local {D}eterministic {M}odel of {S}inglet {S}tate
  {C}orrelations {B}ased on {R}elaxing {M}easurement {I}ndependence. {\em Phys.
  Rev. Lett.} \textbf{105}, 250404.

\bibitem{Ha11}
Hardy L. 2011  Reformulating and {R}econstructing {Q}uantum {T}heory.
  arXiv:1104.2066v3 [quant-ph].

\bibitem{Ha16}
Hall MJW. 2016  The {S}ignificance of {M}easurement {I}ndependence for {B}ell
  {I}nequalities and {L}ocality. In Asselmeyer-Maluga T, editor, {\em At the
  Frontier of Spacetime} ,  pp. 189--204. Springer.

\bibitem{BaGi11}
Barrett J, Gisin N. 2011  How {M}uch {M}easurement {I}ndependence {I}s {N}eeded
  to {D}emonstrate {N}onlocality?. {\em Phys. Rev. Lett.} \textbf{106}, 100406.

\bibitem{PuRoBaLiGi14}
P{\"u}tz G, Rosset D, Barnea TJ, Liang YC, Gisin N. 2014  Arbitrarily {S}mall
  {A}mount of {M}easurement {I}ndependence {I}s {S}ufficient to {M}anifest
  {Q}uantum {N}onlocality. {\em Phys. Rev. Lett.} \textbf{113}, 190402.

\end{thebibliography}

\end{document}